
\documentclass[letterpaper,twocolumn,10pt]{article}
\usepackage{main}

\usepackage{tikz}
\usepackage{amsmath}
\usepackage{caption}
\usepackage{subcaption}
\usepackage{graphicx}
\usepackage{float}
\usepackage{tabularx}
\usepackage{mathptmx}  
\usepackage{amsthm}
\newtheorem{theorem}{Theorem}

\usepackage[linesnumbered,ruled,vlined]{algorithm2e}
\SetKwComment{Comment}{/* }{ */}

\usepackage{filecontents}



\begin{document}

\date{}

\title{\Huge \bf Routing for Large ML Models}

\author{
\rm Ofir Cohen$^1$, Jose Yallouz$^2$, Michael Schapira$^1$, Shahar Belkar$^3$, and Tal Mizrahi$^2$\\\rm $^1$Hebrew University of Jerusalem, $^2$Technion,\\ \rm $^3$Huawei Network.IO Innovation Lab
}

\maketitle

\begin{abstract}
Training large language models (LLMs), and other large machine learning models, involves repeated communication of large volumes of data across a data center network. The communication patterns induced by these training process exhibit high regularity and persistence, giving rise to significant opportunities for optimizing the manner in which flows are routed across the network. We present an algorithmic framework for \textit{quantifying} network-wide efficiency in the context of training LLMs (and other large-scale ML models), and for periodically \textit{optimizing} routing with respect to this global metric.
\end{abstract}

\section{Introduction}

Deep Neural Networks (DNNs), the workhorses of deep learning, have gained immense popularity across multiple application domains. More recently, massive scale DNNs have proven invaluable for generative AI applications that are growing in an unprecedented pace. OpenAI's ChatGPT, for instance, has reached $100$ million active users within three months of its release, making it the fastest-growing application ever~\cite{chatgpt-article}.
This growth has led to an urgent demand for efficient distributed DNN training systems.

Today’s DNN training platforms are typically built on top of traditional data center clusters, organized in traditional Clos network topologies~\cite{10.1145/1402958.1402967}.
To facilitate more communication-efficient DNN training, prior research has investigated how to reduce the size of the DNN parameters transmitted across the network~\cite{alistarh2017qsgd,8868152,chung2017accelerating,goyal2018accurate,hashemi2018tictac,iandola2016firecaffe,jayarajan2019prioritybased,jia2018highly,mudigere2023softwarehardware,harlappipedream,wang2019blink}, as well as parallelization strategies that take into account the available network bandwidth~\cite{addanki2019placeto,alistarh2017qsgd,jia2018data,harlappipedream,tarnawski2020efficient}. 
While these proposals co-optimize computation and communication, they do not consider the \emph{physical network topology} as an optimization dimension. In addition, these proposals consider a single training job, and not the common scenario where multiple jobs co-exist. Recent work, namely, TopoOpt~\cite{wang2022topoopt} considers the simultaneous optimization of switch-level topology and ML training parallelization strategy, but requires large changes to the network infrastructure (optical switching) and only has limited benefits when traffic from hosts under the same switch must reach hosts under multiple switches.


Our aim is to devise methodologies for the online adaptation of routing configurations in ML training clusters that improve global training efficiency and fairness. Our approach builds on two characteristics of ML training and modern networking:

\begin{itemize}

\item Traffic patterns induced by ML training tend to exhibit high regularity and predictability, with the same pairs of hosts/GPUs periodically exchanging similar volumes of traffic.

\item The SmartNIC at a host can determine which path traffic will traverse. This can be implemented either using segment routing (e.g., SRv6), or by controlling the manipulating the packet header fields to ensure that ECMP hashing at switches maps the packet to the desired outgoing ports.

\item A large portion of the traffic is RDMA traffic send by the SmartNIC. This gives rise to new opportunities for optimization, as shall be discussed below.

\end{itemize}

We present a routing system for ML training that comprises global optimization by a centralized controller and host-controlled routing. Our system optimizes a global notion of efficiency and fairness for ML training, which we refer to as 2-layered max-min fairness by executing a simple and robust optimization algorithm with provable guarantees. We evaluate our system using a packet-level simulator, demonstrating its performance benefits. Our evaluation results show that our routing scheme outperforms traditional routing and, more importantly, almost matches the global optimum. We also contrast our methodology with recent approaches for optimizing the network topology itself via optical switching.

\section{Preliminaries}

\subsection{Training Large Models}

    \begin{table*}[t]
        \centering
        \begin{tabular}{|c|c|c|}
            \hline
            \multicolumn{3}{|c|}{LLM sizes} \\
            \hline
             Model & Num. Parameters (B) & Num. Accelerators \\
             \hline
             GPT-3~\cite{brown2020language} & 175 & 1024 GPUs \\
             OPT-175B~\cite{zhang2022opt} & 175 & 992 GPUs \\
             PaLM~\cite{chowdhery2022palm} & 540 & 6144 TPUv3 chips \\
             LLaMA2-70B~\cite{touvron2023llama} & 70 & 2048 GPUs \\
             Gopher~\cite{rae2022scaling} & 280 & 4096 TPUv4 chips \\
             MT-NLG~\cite{smith2022using} & 530 & 4480 GPUs \\
             Pangu-$\Sigma$~\cite{ren2023pangusigma} & 1000 & 512 Ascend 910 \\
             \hline
        \end{tabular}
        \caption{Table of LLMs sizes}
        \label{fig:dnn-models-table}
    \end{table*}

Training deep learning models of considerable size using extensive amounts of training data poses a challenging task.
The conventional method for training deep neural networks (DNNs) is applying stochastic gradient descent (SGD)~\cite{Kiefer1952StochasticEO}.
SGD iterations involve a selection of a random batch of training data, and updating the model's parameters (DNN weights) with respect to the error metric using backpropagation. This goes on until the model achieves the desired level of accuracy. Often, DNN training is distributed across multiple compute nodes, with each node potentially containing several GPUs. In such scenarios, network resources must be shared by different training processes, and even by different training jobs.

There are several possibilities for parallelization.
Here, we discuss the predominant parallelization methods: data parallelism and model parallelism.

\begin{figure}
    \centering
    \includegraphics[scale=0.25]{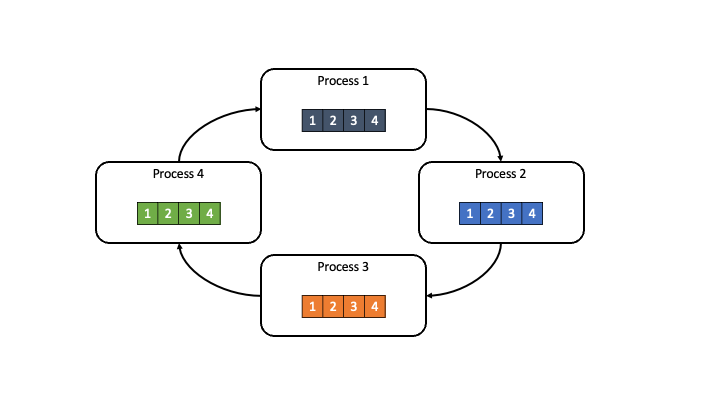}
    \caption{Example of Ring All-Reduce with 4 nodes}
    \label{fig:ring-all-reduce}
\end{figure}

Data Parallelism is a widely adopted parallelization strategy in which the dataset is divided into multiple shards, and each shard is assigned to a specific instance (e.g., GPU or host), which possesses a complete replica of the model, and conducts the forward and backpropagation steps locally. All instances synchronize their model weights during each training iteration, a process commonly knows as \emph{All-Reduce}~\cite{amodei2015deep,8868152,goyal2018accurate,Valiant:1990:BMP:79173.79181}. There are several techniques for optimizing the communication complexity of the All-Reduce process, such as ring All-Reduce~\cite{Baidu,jia2018highly,doi:10.1177/1094342005051521} (see Figure \ref{fig:ring-all-reduce}), tree All-Reduce or hierarchical ring All-Reduce~\cite{doi:10.1177/1094342005051521,8752949}.

To illustrate data parallelism, let us consider Ring All-Reduce. Suppose that there are $N$ instances associated with a training job, numbered $0$,\ldots,$N-1$. The instances are organized into a directed \emph{virtual} ring topology. To simplify exposition, suppose that each instance $i$ is preceded in the ring by instance $i-1$ modulo $N$ and followed by instance $i+1$ modulo $N$. The array of DNN link weights of each instance is partitioned into $N$ subarrays (in the exact same manner), also referred to as ``chunks''. At each iteration of the SGD, each instance updates the link weights of its locally stored model based on its assigned data shard. Then, a sequence of $2N-2$ consecutive communication rounds in which each instance transmits a chunk of the array to the next instance in the ring, is performed to generate, at each instance, the fully updated link-weight vector. See~\cite{amodei2015deep,8868152,goyal2018accurate,Valiant:1990:BMP:79173.79181} for details. In the course of these multiple communication rounds, each instance transmits data amounting to roughly twice the size of full vector of DNN link weights. 

With the introduction of large DNNs, such as LLMs, the entire model no longer necessarily fits into the memory of a single instance, or even host. Consequently, a need for dividing the model across several instances arose. This is accomplished using model parallelism~\cite{10.5555/2999134.2999271}. One type of model paralellism, called pipeline parallelism, is partitioning the model across the vertical dimension, i.e., distributing the DNN layers across different instances.
This method distributes computation across layers~\cite{DBLP:journals/corr/abs-1811-06965,harlappipedream,shoeybi2020megatronlm}. Under pipeline parallelism, in the forward pass, each instance transfers intermediate activations to the subsequent stage, whereas in the backpropagation pass, each instance conveys the gradient of the input tensor back to the preceding pipeline stage. This allows for concurrent computations by the instances, thereby accelerating training. Tensor parallelism~\cite{lepikhin2020gshard,shoeybi2020megatronlm} partitions the model horizontally, i.e., a tensor is divided into $n$ chunks, and each instance exclusively manages $\frac{1}{n}$ of the entire tensor without compromising the accuracy of the overall computation. This, however, requires additional communication between the instances.

Training large DNNs can involve hybrid parallelization strategies, combining data parallelization and both flavors of model parallelism~\cite{meta-dlrm,DBLP:journals/corr/abs-1811-06965,jia2018data,nvidia-article,naumov2020deep,shoeybi2020megatronlm}. GPT-3 training~\cite{brown2020language} is one such example. As discussed in~\cite{rajasekaran2023cassini}, the All-Reduce communication burden associated with the data parallelism aspect dominates the other communication overheads (associated with forward passes and backpropagations). We therefore focus our attention on how to optimize data transfer across the network for this crucial ingredient of the training process.

To get a sense of the communication burden of training, consider the model Bloom~\cite{workshop2023bloom} with similar architecture and number of parameters to GPT-3, using hybrid parallelization, where one full copy of the DNN is vertically partitioned to $12$ stages (groups of consecutive layers) and horizontally partitioned to $4$ chunks, and each copy of the model is replicated $8$ times using data parallelism (overall use of $384$ GPUs, or $48$ hosts if under each host there are $8$ GPUs). This means that each host holds $2$ consecutive layers of the model. Suppose that the $8$ hosts associated with each phase are organized in a virtual ring and Ring All-Reduce is used to communicate link weights between them (see above). Each All-Reduce phase will involve each host sending $14$ flows of size $>15Gbit$. Hence even in a non-blocking network with $100$Gbps links, the communication time will exceed $2$s.

\subsection{Training Clusters}

    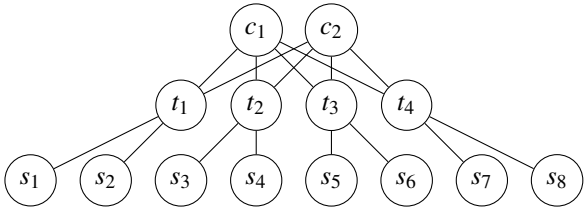
\begin{figure}
        \centering
        \begin{tikzpicture}[every node/.style={circle, draw}]
            \node (s1) at (0,-2) {$s_1$};
            \node (s2) at (1,-2) {$s_2$};
            \node (s3) at (2,-2) {$s_3$};
            \node (s4) at (3,-2) {$s_4$};
            \node (s5) at (4,-2) {$s_5$};
            \node (s6) at (5,-2) {$s_6$};
            \node (s7) at (6,-2) {$s_7$};
            \node (s8) at (7,-2) {$s_8$};
            \node (t1) at (2,-1) {$t_1$};
            \node (t2) at (3,-1) {$t_2$};
            \node (t3) at (4,-1) {$t_3$};
            \node (t4) at (5,-1) {$t_4$};
            \node (c1) at (3,0) {$c_1$};
            \node (c2) at (4,0) {$c_2$};
            \draw (s1) -- (t1);
            \draw (s2) -- (t1);
            \draw (s3) -- (t2);
            \draw (s4) -- (t2);
            \draw (s5) -- (t3);
            \draw (s6) -- (t3);
            \draw (s7) -- (t4);
            \draw (s8) -- (t4);
            \draw (t1) -- (c1);
            \draw (t1) -- (c2);
            \draw (t2) -- (c1);
            \draw (t2) -- (c2);
            \draw (t3) -- (c1);
            \draw (t3) -- (c2);
            \draw (t4) -- (c1);
            \draw (t4) -- (c2);
        \end{tikzpicture}
        \caption{Simple 2-Layer Clos network example}
        \label{fig:sample-dc-te-example}
    \end{figure}

To accommodate the fast growing compute and network demands imposed by ML workloads, many cloud providers offer ML training platforms to enable developers to use ML technologies in an efficient, flexible, and simplified way. For example, Amazon AWS offers \textbf{EC2 UltraClusters}~\cite{amazon-ec2-ultraclusters} with more than $4,000$+ NVIDIA A100 GPUs in $500+$ hosts (with each host containing $8$ GPUs).
Alibaba PAI~\cite{alibaba-pai} provides a similar service with ~$1800$ hosts and a total number of $6,000$+ GPUs. In Alibaba PAI, unlike Amazon AWS, GPU clusters can be heterogenous, with different GPU machines having very different properties.

To network hosts/GPUs, training clusters are typically organized in Clos/fat-tree topologies. See figure \ref{fig:sample-dc-te-example} for an illustration of a $2$-layer Clos network. A Clos network with $64$ top-of-rack (ToR) switches and $32$ spine switches can, for instance, support $32\times 64=2048$ GPUs/hosts (with dedicated NICs).

Training instances are mapped to GPUs/hosts according to workload placement and scheduling policies, like best fit~\cite{258979,280918,10.1145/3190508.3190517}, dot-product~\cite{10.1145/2740070.2626334,10.1145/3190508.3190517,panigrahy2011validating}, and random-fit.

\subsection{Routing Decisions Impact Training}
\label{subsec:routing-decisions}

Naturally, for training to be efficient, the vast amounts of data that must be sent across the network to accommodate the training process must be delivered quickly and efficiently, A crucial factor impacting data delivery in this context is the manner in which data flows are assigned to shortest-paths in the Clos network.

Routing in data centers is predominantly handled by spreading flows across shortest-paths using Equal-Cost MultiPath (ECMP) (ECMP) hashing~\cite{10.17487/RFC2992}. However, while ECMP traffic distribution works well (and is, in fact, optimal~\cite{chiesa2016traffic}) when flows are numerous and small, this is not the case for large flows. Since ECMP does not globally optimize flow assignment to paths, large flows can be mapped to paths traversing the same link even if this can be avoided in other route-assignments~\cite{chiesa2016traffic}. The question of how to map large (``elephant'') flows to paths in a manner that optimizes global performance has thus received considerable attention over the years~\cite{chiesa2016traffic,10.5555/1855711.1855730}.

We illustrate ECMP's deficiencies using the small network in Figure \ref{fig:sample-dc-te-example}. Suppose there are two ML training jobs: $j_1$ with dedicated hosts $(s_1,s_2,s_3)$ and $j_2$ with dedicated hosts $(s_4,s_5,s_6)$. Both training jobs are organized into virtual rings, where $j_1$ can be expressed as $(s_1 \to s_2 \to s_3 \to s_1)$, with each host sending traffic to the next host in the sequence, and $j_2$ can be expressed as $(s_4 \to s_5 \to s_6 \to s_4)$. All link capacities are $1$.

Ignoring flows with sources and destinations under the same top-of-rack (ToR), the ToR-to-ToR traffic demands induced by both jobs are expressed by the following demand matrix:

    \begin{center}
        $D = \begin{bmatrix}
            0 & 1 & 0 & 0 \\
            1 & 0 & 1 & 0 \\
            1 & 0 & 0 & 0 \\
            0 & 0 & 0 & 0
        \end{bmatrix}$
    \end{center}

As described in the matrix, there are four flows (``commodities'') that must traverse the network. Each flow can be mapped to one of four shortest paths. Observe that there are two flows leaving $t_2$ (towards different destinations). If both flows are mapped by ECMP to routes that traverse the same spine switch, say $c_1$ (and then continue to the appropriate destinations), the two flows will contend over the bandwidth of the link $(t_2,c_1)$, resulting in each transmitting traffic at a speed that is half link bandwidth. Similarly, the two flows entering $t_1$ might be mapped to paths traversing the same link. Observe that better global route configurations, where each flow sends traffic at full bandwidth, are available.

The harm to performance inflicted by ECMP's notoriously suboptimal routing configurations is further aggravated in the ML training context for two important reasons:

\begin{itemize}
    \item {\bf Flows are long.} As discussed above, flows can take a long time to complete. Hence, inefficiency resulting from co-placement of flows onto the same links can persist for a long time.

    \item {\bf Performance is dictated by the slowest flow.} Consider a Ring All-Reduce operation. Flows carry chunks of the DNN link weights array that must be sent (in a specific order) for the SGD gradient update to be completed. If traffic between one node in the ring and another is significantly slower than that between other nodes, the completion of the gradient update will be dictated by the traffic speed for that particular connection.
     
\end{itemize}

Put together, the above two points imply that slowing down a large flow as a result of suboptimal routing can both have much broader impact for the application (ML training) as a whole, and that its adverse effect can be long lasting.

\subsection{Takeaways}

\noindent{\bf Smaller networks (than all-purpose datacenters).} As reflected by the above discussion of the sizes (in terms of GPUs and hosts) of training clusters. These are considerably smaller than large-scale all-purpose datacenters. Consequently, relatively smaller networks are required to interconnect the various compute resources.

\vspace{0.1in}\noindent{\bf Predictable and persistent traffic.} Due to the repetitive structure of the training process, traffic is highly predictable. Moreover, as illustrated by our back-of-the-envelope calculations for the Ring All-Reduce example, communication is persistent, in that a single communication phase can be in the order of seconds.

\vspace{0.1in}\noindent{\bf Routing can crucially affect training speed.} The notorious limitations of traditional datacenter routing can amplified due to the specific characteristics of the training process (and, in particular, parallelization and long-lasting flows).

\vspace{0.1in}\noindent{\bf Must design for failures, and heterogeneity of compute resources, workloads, and parallelization architectures.} Different training jobs involve different hardware requirements, different parallelization strategies (e.g., All-Reduce virtual topology structures and sizes), different communication patterns, etc. This calls for robust optimization tools. 

\section{System}

We present below an overview of our objectives and a high-level description of our system design, followed by a detailed explanation.

\subsection{Design Objectives}

\noindent{\bf Optimize global communication efficiency.} Our objective is to route data across the network in a manner that optimizes global communication efficiency. We shall discuss the specific global objective targeted, which is tailored to the training context, below.

\vspace{0.1in}\noindent{\bf Fairness across jobs.} When multiple training jobs share the same infrastructure, guaranteeing fairness between jobs is important. We shall explain how this is integrated into our formal objective.

\vspace{0.1in}\noindent{\bf Support for hardware and workload diversity.} We seek solutions that naturally extend to hardware diversity, and to different ML workloads and communication patterns. We also aim for solutions that provide high efficiency also in the presence of network failures.

\subsection{Architecture}

\begin{figure*}[ht]
    \centering
    \includegraphics[width=\textwidth]{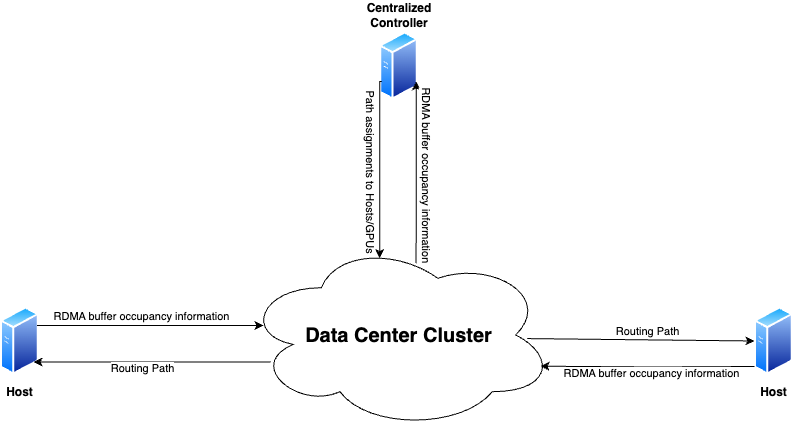}
    \caption{Hosts send RDMA buffer occupancy information and the controller sends path assignments to hosts/GPUs.}
    \label{fig:architecture-figure}
\end{figure*}

Our system architecture comprises two elements: (1) a centralized controller, and (2) the hosts. 

\vspace{0.1in}\noindent{\bf Centralized controller.} The controller is tasked with the re-assignment of routes to ML flows whenever a new flow is initiated or an ongoing flow terminates.

The input to the computation performed by the controller thus consists of the set of flows currently active in the system, specified by the two communication end-points (hosts, VMs), and the current network topology (accounting, e.g., for currently inactive links or nodes). The output specifies a path for each flow (interconnecting that flow's source and destination), and, potentially, a maximum transmission rate for every flow.

Re-computation of routes need not happen only after a new flow appears or an existing flow terminates. Indeed, an important feature of the controller is proactively computing routing configurations for the future. This capability of the controller is derived from the knowledge of how much data each flow is expected to send, as captured by the host's RDMA send buffers, and by the predictability of the traffic pattern. By knowing how much a flow intends to send upon the initialization of the flow, as well as how much bandwidth is allocated to the flow by the controller itself, the controller can estimate at what time the flow will terminate. By predicting which flow is expected to start sending next, the controller can also proactively re-compute routes prior to the start of that flow.

\vspace{0.1in}\noindent{\bf Hosts.} Hosts are tasked with enforcing the controller's decision, namely, performing source routing to ensure that each flow's traffic traverses the designated path. Hosts are also required to relay to the controller information required for informing its path computations. Specifically, whenever a new flow is initiated, the originating host is required to send to the controller the occupancy of the RDMA buffer for that flow.

\subsection{$2$-Layer Max-Min Fairness}

We next formulate the global objective targeted by our system. 

\vspace{0.1in}\noindent\textbf{The network.} We model the network as a graph $G = (V,E)$, where $V(G)$ is the set of vertices, and $E(G)$ is the set of links. $V(G)$ is the union of three disjoint sets of vertices $V(G) = V_T \cup V_C \cup V_h$, the set of ToR switches, spine switches, and Hosts, respectively. We point out that while in our formulation, communication end-points are perceived as hosts, our formulation and results can be extended to the scenario that these are GPUs or VMs.

\vspace{0.1in}\noindent\textbf{Training jobs.} There is a set of training jobs $J$, where each job in $J$ is represented as a directed graph $G_J$ over the vertex set $V_h$. Edges in this graph correspond to direct communication between hosts, as prescribed by the data parallelization strategy. For instance, two hosts, $h_1$ and $h_2$, are required to directly communicate in a Ring All-Reduce operation, this will be represented by a (directed) edge between their associated vertices in $G_J$.

\vspace{0.1in}\noindent\textbf{Path assignments.} Recall that the controller assigns new paths to flows whenever a new flow starts or an existing flow terminates. The input to each computation phase performed by the controller is thus a set of \emph{active} training jobs $J_a\subseteq J$. Since each training job description specifies pairs of communicating hosts, this input can be regarded as a set of flows (commodities) $C$, comprised of disjoint subsets of flows, where each subset $C_j$ is associated with unique training job $j\in J$. Let $P_{uv}$ be the set of paths interconnecting the hosts $u$ and $v$ in $V_h$. A path assignment maps each commodity in $C_j$ to a path $p\in P_{uv}$, where $u$ and $v$ are the source and destination hosts for this commodity. A path assignment induces a bandwidth share $b_c\geq 0$ for each commodity $c$, where $b_c$ specifies the speed at which commodity $c$ sends traffic along its designated path in the equilibrium reached by the congestion control dynamics. 

\vspace{0.1in}\noindent\textbf{Global optimization objective.} Consider an individual training task $j$ and its associated set of commodities, $C_j$. Due to the All-Reduce traffic pattern derived by data parallelism, the overall running time of each communication phase is dependent on the \emph{slowest} connection. Hence, the optimizing performance with respect to a single job involves maximizing the bandwidth consumed by the slowest flow. Consider, now, the allocation of bandwidth across different jobs. We strive to optimize the max-min fairness notion of optimality. The rationale is that fast speed for one job should not come at the expense of harming a slower job. Since each job's performance is influenced by its slowest connection, this amounts to maximizing the bandwidth share allocated to the slowest connection \emph{across all jobs}, i.e., 

\[ maximize\ min_{j\in J}min_{c\in C_j}\ b_c = maximize\ min_{c\in C}\ b_c\]

This can be regarded as a $2$-layer max-min fairness objective, where max-min fairness is applied both across the commodities corresponding to the same job and across jobs.

\vspace{0.1in}\noindent\textbf{Remark: nonblocking \emph{vs.} oversubscribed networks.} We do not make the assumption that the network is nonblocking and so our solutions also target scenarios where not all permutation traffic matrices between communication end-points can be simultaneously supported. We emphasize that even if the network itself is nonblocking, oversubscription can arise due to:

\begin{itemize}

\item {\bf Network failures.} When links of nodes fail, the network, even if initially nonblocking, can naturally become oversubscribed. We envision the controller's recomputation of paths as also triggered by such events.

\item {\bf GPU:NIC ratio < 1.} In some training platforms, e.g., those leveraging Amazon EC2 P4d instances~\cite{amazon-ec2-ultraclusters}, the number of GPUs in a server might exceed the number of NICs (e.g., $8$ GPUs and $4$ NICs~\cite{amazon-ec2-ultraclusters,nvidia-dgx-a100}). This naturally induces inherent oversubscription (since even the $8$ GPUs within the same server cannot simultaneously send traffic at line rate.
\end{itemize}

Our methods apply to all above discussed scenarios.

\subsection{Controller Design Details}

\subsubsection{\bf Route-Optimization Scheme}

The greedy algorithm

    \label{alg:mcvlc}
    \begin{algorithm}
        \scriptsize
        \DontPrintSemicolon
        \KwIn{$C = \cup_{j \in J_a} C_j$ set of commodities associated with active training jobs}
        \KwOut{Path assignments $P$}
        $P \gets \{\}$ \Comment{Mapping of commodity to routing path}
        $L \gets \{\}$ \Comment{Mapping of link to number of path assignments traversing it}
        \For{$c=(u,v) \in C$}{
            \Comment{Initially assign the path with the lowest spine id, where $P_{uv}$ is ordered by spine ids}
            $p \gets P_{uv}(0)$\;
            $l_{c} \gets \max_{e:e \in p} L(e)$\; \Comment{$l_c$ Counts the number of assignments traversing through the "busiest" link of path $p$}
            \Comment{Loop through all paths and change to the one with least number of assignments}
            \For{$p' \in P_{uv}$}{
                $l_{c}' \gets \max_{e:e \in p'} L(e)$\;
                \If{$l_{c}' < l_{c}$}{
                    $l_{c} \gets l_{c}'$\;
                    $p \gets p'$
                }
            }
            $L(e) \gets L(e) + 1 \forall e \in p$\;
            $P \gets P \cup \{c \to p\}$\;
        }
        \Return{$P$}\;
        \caption{{\sc Greedy path assignments}}
    \end{algorithm}

To accommodate timely and effective decisions, the algorithm executed by the controller should be both fast to run and provide close-to-optimal quality solutions. The pseudocode for our algorithm is presented above. The greedy algorithm goes over commodities in some arbitrary order, mapping each commodity to the ``least congested'' path so far. As we show below, this simple and robust algorithm yields solutions that are both provably and empirically close to those of the optimal solution.

\subsubsection{\bf Forward-Looking Optimization}

\noindent{\bf Predicting future traffic patterns.} Recall that controller decisions (path assignments) are enforced whenever a new flow enters the system or an existing flow terminates. To avoid performing computations only after such an event occur, and the associated time lag, the controller attempts to predict how the set of active flows might change and precompute a path assignment for the predicted scenario. Determining which flows will be active requires figuring out when the existing flows are expected to terminate and when new communication will begin. We discuss each of these challenges separately.

\begin{itemize}
\item {\bf Estimating when the current flows will terminate.} To estimate when a given flow terminates, we need to know how much data it is expected to send and at which speed the data will be sent. The former can be derived from hosts' reports of the occupancy of the RDMA send buffer whereas the latter is dependent on the path assignment produced by the controller itself.

\item {\bf Estimating when a new flow will join the system.} The traffic patterns induced by large ML model training are rather predictable. Hence, predictive models can be leveraged to estimate when currently inactive flows shall become active.

\end{itemize}

\noindent{\bf Precomputing path assignments for failure scenarios.} To protect against node/link failures, the controller could perform path-assignment optimizations for different failure scenarios (e.g., all single link failures) in the background. When such a failure occurs, the controller could simply issue its precomputed path assignment, avoiding the online optimization runtime.

\subsubsection{\bf Runtime Parallelization Strategies}

While, ideally, controller computations will occur before the arrival/departure of flows, scenarios where the controller failed to correctly predict the upcoming traffic conditions call for fast computation of new path assignments after the fact. We point out that the greedy path-assignment algorithm, which is already expected to be fast due to its simple nature, can be parallelized as follows:

\begin{enumerate}

\item Create a graph whose vertices are the commodities and two vertcies (commodities) are connected by an edge if and only if they share the same source ToR or destination ToR.

\item Divide the graph into maximal connected components.

\item Apply the algorithm to the commodities in each connected component in parallel.
\end{enumerate}

\subsection{Host Design Details}

\subsubsection{Overview}
Hosts are connected to the network using SmartNICs (Smart Network Interface Cards) that provide hardware offloading for RDMA (Remote Direct Memory Access). The SmartNIC has direct access to the host memory, and thus data transfers are handled by the SmartNIC without the CPU's intervention. 

There are two notable assumptions with regards to the host behavior. Firstly, the assumption is that the lion's share of the traffic is RDMA traffic, which is sent by the SmartNIC. Therefore, when an RDMA message is invoked the SmartNIC is aware of the length of the RDMA message. While there may be non-RDMA traffic in the network, it is assumed that this is a small fraction of the network bandwidth. 
The second assumption is that the SmartNIC controls the path through which a packet is forwarded. 

The scheme that was described above requires the host to be able to detect new flows and update the controller about them and to route these flows (RDMA messages) according to the controller's policy. Our assumption is that hosts use the controller's routing policy to determine how ``elephant'' flows are forwarded, i.e., large RDMA messages, whereas ``mice'' flows are forwarded using conventional load balancing approaches such as ECMP.

\subsubsection{Leveraging the RDMA Message Length}
The length of an RDMA transfer is known when its Working Queue Element (WQE)~\cite{IB} is created. This allows a host to distinguish elephants from mice, as only elephant flows require the controller's routing decision. The host notifies the controller about each new elephant and its length, allowing the controller to predict when the message will be completed.

\subsubsection{Enforcing route decisions}
The host can control the routing decision by using segment routing (e.g., SRv6~\cite{rfc8402}). However, since segment routing is not commonly deployed in data center networks, a more common approach for the host to control the selected path is by controlling the content of the header fields used by the switches for making ECMP-based decisions. 

Specifically, when using RoCEv2~\cite{RoCEv2}, the UDP source port is not used by the receiver's UDP layer, and therefore an RDMA sender can vary the UDP source port in order to use multiple paths. Since a data center network is a contained administrative domain, the switch load balancing mechanisms can be configured in a way that is consistent with this assumption. 

Our scheme allows a straightforward way of enforcing routing decisions: the controller notifies a host about the selected path for a given flow by assigning a specific UDP source port for it. This approach does not require hosts to be aware of the network topology, while the controller can assign UDP source port values to different flows from a network-wide perspective.

\subsubsection{New flow procedure}
The procedure taken by the host when a new (elephant) flow appears is as follows:

\begin{itemize}
	\item A new RDMA message needs to be transmitted. If the message length is less than a predetermined threshold, it is forwarded using ECMP (and the procedure is concluded).
	\item Otherwise, a request is sent to the controller, containing details about the new message, including its length.
	\item The controller assigns a path (UDP source port), according to its (online computed or precomputed) path assignment, for the new flow, and sends a response to the host.
	\item The host transmits the flow (message) along the assigned path.
\end{itemize}

\subsubsection{Safety}
Safety mechanisms could be incorporated into hosts to handle unexpected changes to the traffic pattern or network topology (while the controller is processing these changes). Such scenarios include a newly arrived job starting to send and node/link failure. The host could run measurements on its assigned paths and revert to ECMP when experienced latency/pack loss rate for a certain commodity exceed a certain threshold. Another potential safety mechanism can take into account the response time from the controller. If the controller's response to new flow assignment becomes prohibitively high the host can fall back into a default path assignment scheme such as ECMP. It should be noted that safety mechanisms that run at the host are independent of any safety mechanisms that run at the controller. The path assignment mechanism as well as the flow prediction mechanism can each have it own safety mechanism which allows the controller to fall back to a simpler default algorithm.

\section{Theoretical Evaluation}

We consider our 2-layered max-min-fairness objective in the context of a \textit{directed} 2-layer Clos networks.

\noindent{\bf Input:} A 2-layer directed Clos network (full bipartite graph with bidirectional links) $G=(V,E)$, where $V=T\bigcup S$ and link have uniform capacities (to simplify exposition, and WLOG, all link capacities are assumed to be $1$ henceforth).
The two disjoint vertex sets, $T$ and $S$ represent the ToRs and the spine switches, respectively, whereas the edge set $E$ represents communication links).
The input also consists of a superset of commodities $C$ of the form $(u,v)\in T^2$, each associated with demand $d_{uv}\in\{0,1\}$. (We ignore the host-to-ToR for the purpose of this analysis and zoom in on the inter-ToR efficiency).

We consider the simple greedy algorithm $Greedy$ that goes over the commodities in $C$ in some arbitrary order and maps each commodity to the least congested shortest-path interconnecting its source and destination.

We prove the following theorem:

\begin{theorem}
$Greedy$ provides a $2$-approximation to the $2$-layered max-min fairness objective in directed $2$-layer Clos networks.
\end{theorem}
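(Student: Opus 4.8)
The plan is to first reduce the $2$-layered max-min fairness objective to a purely combinatorial \emph{congestion-minimization} problem, and then analyze $Greedy$ against that cleaner target. The key observation is that, under the max-min fair bandwidth equilibrium reached by the congestion-control dynamics, for a fixed path assignment with unit link capacities the minimum rate is governed entirely by the most loaded link: if $L(e)$ denotes the number of flows routed through link $e$ and $L_{\max}=\max_e L(e)$, then $\min_{c\in C} b_c = 1/L_{\max}$. Indeed, the flows crossing a maximally loaded link share capacity $1$, so at least one of them has rate $\le 1/L_{\max}$; conversely, at a flow's bottleneck link it attains the maximum rate among the (at most $L_{\max}$) flows crossing it, so every flow has rate $\ge 1/L_{\max}$. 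Hence maximizing $\min_c b_c$ is exactly minimizing $L_{\max}$, and it suffices to prove that the maximum link load produced by $Greedy$, call it $M$, satisfies $M\le 2\cdot OPT$, where $OPT$ is the minimum achievable maximum load.

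Next I would establish a lower bound on $OPT$ by pigeonhole. In a $2$-layer Clos every shortest $u\to v$ path has the form $u\to s\to v$ for a spine $s\in S$, so choosing a path is choosing a spine. Writing $\mathrm{out}(u)$ for the number of commodities sourced at $u$ and $\mathrm{in}(v)$ for the number destined to $v$, the $\mathrm{out}(u)$ flows leaving $u$ must be spread over the $|S|$ up-links of $u$, forcing $OPT \ge \lceil \mathrm{out}(u)/|S|\rceil$, and symmetrically $OPT\ge \lceil \mathrm{in}(v)/|S|\rceil$; equivalently $\mathrm{out}(u),\mathrm{in}(v)\le |S|\cdot OPT$ for all $u,v$.

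The heart of the argument is the greedy guarantee. I would take a link realizing the final load $M$; by symmetry assume it is an up-link $(u\to s^\ast)$, and consider the last commodity $c=(u,v)$ that $Greedy$ routed through it. At the moment $c$ was assigned, $L(u\to s^\ast)=M-1$, and because $Greedy$ picked $s^\ast$ as the spine minimizing the busiest link of the chosen path, every spine $s$ satisfied $\max\{L(u\to s),L(s\to v)\}\ge M-1$. Thus $S$ splits into a set $A$ of spines whose up-link from $u$ already had load $\ge M-1$ and a set $B$ whose down-link to $v$ already had load $\ge M-1$, with $A\cup B=S$. Counting load gives $|A|(M-1)\le\sum_s L(u\to s)\le \mathrm{out}(u)-1$ and $|B|(M-1)\le\sum_s L(s\to v)\le \mathrm{in}(v)-1$, where the $-1$ accounts for $c$ itself, not yet routed. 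Summing and using $|A|+|B|\ge|S|$ yields $|S|(M-1)\le \mathrm{out}(u)+\mathrm{in}(v)-2\le 2|S|\cdot OPT-2$.

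Finally I would extract the constant. Dividing by $|S|$ gives $M-1\le 2\cdot OPT-2/|S|<2\cdot OPT$, and since $M$ and $OPT$ are integers (and the case $|S|=1$ is trivial, as all paths coincide) this forces $M\le 2\cdot OPT$, i.e. $\min_c b_c\ge \tfrac12\cdot\frac{1}{OPT}$ under $Greedy$, the claimed $2$-approximation. I expect the main obstacle to be twofold: first, rigorously justifying the reduction $\min_c b_c = 1/L_{\max}$ from the max-min-fair equilibrium rather than from an ad hoc per-flow rate model; and second, squeezing out the exact factor $2$ — the naive counting yields only $M\le 2\,OPT+1$, and recovering the clean bound depends on carefully retaining both $-1$ terms (excluding the just-routed commodity $c$) and then invoking integrality together with $2/|S|\le 1$.
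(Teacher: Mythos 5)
Your proof is correct and follows the same overall strategy as the paper's: reduce the $2$-layered max-min objective to minimizing the maximum link load (via $\min_{c} b_c = 1/L_{\max}$ under the max-min-fair equilibrium), and then show that the greedy assignment's maximum load $M$ is at most twice the optimum's. Like the paper, you zoom in on the moment a commodity $(u,v)$ pushes some link to load $M$ and exploit the fact that greedy's choice forces every candidate path $u\to s\to v$ to already be congested at level $M-1$. Where you genuinely diverge is in how the factor $2$ is extracted. The paper notes that each of the $|S|$ congested paths is blocked either on its $u$-side or its $v$-side link, concludes that at least half the up-links of $u$ \emph{or} at least half the down-links of $v$ carry load $M-1$, and then finishes with a two-case integrality analysis against $OPT$ on that one side. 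You instead keep both sides: partitioning the spines into $A\cup B=S$ according to which side is blocked and summing the two pigeonhole bounds $|A|(M-1)\le \mathrm{out}(u)-1$ and $|B|(M-1)\le \mathrm{in}(v)-1$ yields $|S|(M-1)\le 2|S|\cdot OPT-2$ in one shot, after which a single integrality step converts the naive $2\,OPT+1$ into $2\,OPT$. This accounting is cleaner --- it avoids the ``WLOG at least half'' step and the case split --- and it makes explicit exactly where the slack from the not-yet-routed commodity and the integrality of loads are consumed. A cosmetic difference is that the paper tracks prefix optima $OPT_k$ along the greedy order while you compare directly against the full optimum at the final maximal link; both are valid since restricting $OPT$ to a prefix can only decrease its maximum load.
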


The proof of the above theorem crucially hinges on the correctness of the next result.

\begin{theorem}\label{thm:greedy-approx}
The maximum number of paths traversing a link in $Greedy$'s outputted path assignment is at most $2$ times higher than this number in the optimal solution.
\end{theorem}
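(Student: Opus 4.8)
The plan is to analyze $Greedy$'s final link loads directly, exploiting the structure of shortest paths in a two-layer Clos and using only the easy pigeonhole lower bound on the optimum. Let $k=|S|$ be the number of spines. The first step is to record the structural facts: every shortest path from a source ToR $u$ to a destination ToR $v$ has the form $u\to s\to v$ for some spine $s$, so it uses exactly one ``up'' link $(u,s)$ and one ``down'' link $(s,v)$. Consequently the load on an up-link $(u,s)$ equals the number of source-$u$ commodities routed through $s$, and the load on a down-link $(s,v)$ equals the number of destination-$v$ commodities routed through $s$. Writing $out(u)$ and $in(v)$ for the number of commodities with source $u$ and destination $v$ respectively, a pigeonhole argument shows that in \emph{any} assignment some up-link out of $u$ carries at least $\lceil out(u)/k\rceil$ paths and some down-link into $v$ at least $\lceil in(v)/k\rceil$; hence $\mathrm{OPT}\ge\max\bigl(\lceil out(u)/k\rceil,\lceil in(v)/k\rceil\bigr)$, which in particular gives $out(u)\le k\cdot\mathrm{OPT}$ and $in(v)\le k\cdot\mathrm{OPT}$ for every $u,v$.

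The core of the argument is a local analysis of the moment a maximally loaded link reaches its final value. Let $\ell$ be $Greedy$'s maximum link load and, without loss of generality (the down-link case being symmetric), suppose it is attained on an up-link $(u,s)$. I would pick the \emph{last} commodity $c=(u,v)$ whose assignment incremented $(u,s)$; just before this assignment $(u,s)$ held load $\ell-1$ and $Greedy$ routed $c$ through $s$. Since $Greedy$ chose $s$ as the spine minimizing the busiest-link load of the path $u\to\cdot\to v$, at that instant every spine $s'$ must have satisfied $\max\bigl(L(u,s'),L(s',v)\bigr)\ge\ell-1$. Summing this inequality over all $k$ spines yields $k(\ell-1)\le\sum_{s'}\bigl(L(u,s')+L(s',v)\bigr)$.

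The final step is to bound the right-hand side and convert to the claimed factor. The sum $\sum_{s'}L(u,s')$ counts source-$u$ commodities already assigned, which is at most $out(u)-1$ since $c$ itself is not yet counted; symmetrically $\sum_{s'}L(s',v)\le in(v)-1$. Combining with the lower bound gives $k(\ell-1)\le out(u)+in(v)-2\le 2k\cdot\mathrm{OPT}-2$, so that $\ell-1<2\cdot\mathrm{OPT}$, and integrality of $\ell$ and $\mathrm{OPT}$ then forces $\ell\le 2\cdot\mathrm{OPT}$.

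I expect the main obstacle to be precisely this last off-by-one: the most natural counting (summing the greedy inequality and plugging in $out(u),in(v)\le k\cdot\mathrm{OPT}$) only yields $\ell\le 2\cdot\mathrm{OPT}+1$. Closing the gap hinges on the careful bookkeeping that the triggering commodity $c$ is not yet reflected in the partial loads --- shaving one off \emph{each} of the two partial sums --- together with the integrality rounding. The remaining care is in handling the down-link case (which is fully symmetric) and in confirming that the greedy selection rule indeed certifies the ``for all spines'' inequality used above.
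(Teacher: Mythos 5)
Your proof is correct, and it rests on the same core mechanism as the paper's: isolate the commodity whose assignment creates the maximum load $\ell$, use $Greedy$'s minimality over all $|S|$ spine choices to certify that every alternative path was already congested to level $\ell-1$, and then lower-bound the optimum by averaging the forced aggregate demand at the source (or destination) ToR over its $|S|$ parallel links. The execution differs in two ways that are worth noting. First, where the paper argues one-sidedly --- each congested path has its bottleneck in $L_1$ or $L_2$, so at least half of the links on one of the two sides carry load $\ell-1$, WLOG the $L_1$ side --- you sum the two-sided inequality $L(u,s')+L(s',v)\ge\max\bigl(L(u,s'),L(s',v)\bigr)\ge\ell-1$ over all spines, which folds $out(u)$ and $in(v)$ into a single count and avoids both the WLOG and the majority argument. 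Second, the off-by-one that separates $2\cdot\mathrm{OPT}$ from $2\cdot\mathrm{OPT}+1$ is closed differently: the paper runs a case analysis on whether $OPT_{k-1}$ spreads the load perfectly evenly (and then charges the $k$-th commodity itself to $OPT_k$) or unevenly (and then invokes integrality), whereas you discount the triggering commodity from both partial sums ($\le out(u)-1$ and $\le in(v)-1$) and finish with a single integrality rounding. Both are valid; your version is arguably tighter bookkeeping with fewer branches, while the paper's prefix formulation ($|Greedy_j|\le 2|OPT_j|$ at every step where the greedy maximum increases) yields a slightly stronger, anytime guarantee rather than a bound only on the final assignment.
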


\begin{proof}
Fix an arbitrary order over the commodities in $C$ and suppose that $Greedy$ considers commodities according to this order.
Let $OPT_j$ and $Greedy_j$ be the assignments of shortest-paths to the first $j$ commodities in the optimal solution (for all commodities) and the greedy-outputted solution, respectively.
To prove the theorem, we show for every commodity $k$ for which $|Greedy_k|>|Greedy_{k-1}|$, it holds that $|Greedy_k|\leq 2|OPT_k|$.
Observe that this indeed implies the theorem.

Consider a specific commodity $k=(u,v)\in C$ for which the above holds.
Let $P_{uv}$ be the set of shortest-paths between $u$ and $v$ in $G$.
For each path $p\in P_{uv}$ let the \textit{congestion} of the path be the maximum congestion (aggregate demand) across its (two) links.
Observe that the congestion of each path $p\in P_{uv}$ in $Greedy_{k-1}$ must be exactly $|Greedy_k|-1$ (for otherwise, $Greedy_k$ would have been lower or $|Greedy_k|=|Greedy_{k-1}|$ --- a contradictions).
Let $L_1$ be the set of $|S|$ links (edges) leaving the vertex $u$ and $L_2$ be the set of $|S|$ links entering the vertex $v$. Observe that each path $p\in P_{uv}$ traverses a unique link in $L_1$ and a unique link in $L_2$.
Since the congestion on each path is $|Greedy_k|-1$, either its $L_1$ link or its $L_2$ link, or both, must experience congestion of $|Greedy_k|-1$ in $Greedy_{k-1}$.
Since all paths in $P_{uv}$ have congestion $|Greedy_k|-1$, at least half of the links in $L_1$, or at least half of the links in $L_2$ (or both), must have congestion $|Greedy_k|-1$.
WLOG, suppose that this is the case for $L_1$.
This implies that the aggregate demand from $u$ to other vertices in the first $k-1$ commodities is $\frac{|S|}{2}(|Greedy_k|-1)$.
In any solution, including $OPT_{k-1}$, this aggregate demand must be split across the $|S|$ links in $L_1$.
We analyze two cases:

\begin{itemize} 

\item {\bf All links in $L_1$ have congestion $\frac{|Greedy_k|-1}{2}$ in $OPT_{k-1}$.} Observe that in this scenario, the load of at least one link in $L_1$ in $OPT_k$ must be $\frac{|Greedy_k|-1}{2}+1=\frac{|Greedy_k|+1}{2}$ and so $Greedy_k\leq 2OPT_k$.

\item {\bf At least one link in $L_1$ has congestion strictly higher than $\frac{|Greedy_k|-1}{2}$ in $OPT_{k-1}$.} Since congestion is integral (as demands are in $\{0,1\}$), this implies that the congestion on that link is at least $\frac{|Greedy_k|-1}{2}+\frac{1}{2}=\frac{|Greedy_k|}{2}$. Hence, in this scenario, too, $Greedy_k\leq 2OPT_k$.

\end{itemize}
 
\end{proof}

To see why the above implies our max-min-fairness result, consider let $x$ be the maximum number of paths traversing a link in the final solution for $Greedy$. By the above result, this number for $OPT$ cannot be less than $\frac{x}{2}$. Hence, the minimum bandwidth for a commodity in $Greedy$ is $\frac{1}{x}$, whereas in $OPT$ there is a commodity with bandwidth at most $\frac{2}{x}$. The theorem follows.

\section{Theoretical Evaluation}
\label{sec:evaluation}

Our evaluation answers the following questions: 
\textbf{(1)} How does our framework compare with various alternative routing schemes and with the global optimum? (We contrast our proposal with TopoOpt in the next section), \textbf{(2)} How does our scheme perform under network faults? \textbf{(3)} How long does our approach take to assign routes? \textbf{(4)} How does our scheme perform under network oversubsciption?

\subsection{Methodology}
\label{subsec:methodology}
We evaluate our system under various workloads, and with respect to different performance objectives with a custom simulator.

\vspace{0.1in}\noindent\textbf{Topology.} We consider a $2$-level Clos network with $32$ spine switches and $64$ ToR switches. Each host in the network holds $8$ GPUs equipped with a dedicated NIC, similar to NVidia DGX A100~\cite{nvidia-dgx-a100}. Each link has capacity of $100$ Gbps.

\begin{table*}[t]
    \centering
    \begin{tabular}{|c|c|c|c|}
        \hline
         Model & Tensor Parallelism & Pipeline Parallelism & Full Copy \\
         \hline
         BLOOM~\cite{workshop2023bloom} & 4 & 12 & 48 \\
         GPT-3~\cite{brown2020language} & 8 & 8 & 64 \\
         LLaMA2-70B~\cite{touvron2023llama} & 8 & 16 & 128 \\
         \hline
    \end{tabular}
    \caption{Dimension Sizes (Number of GPUs)}
    \label{fig:evaluation-dimension-sizes}
\end{table*}

\vspace{0.1in}\noindent\textbf{DNN Workloads.} We consider three real-world DNN models: BLOOM~\cite{workshop2023bloom}, GPT-3~\cite{brown2020language} and LLaMA2-70B~\cite{touvron2023llama}. Each model is trained with $3D$ parallelism (Tensor, Pipeline and Data parallelism), where in Table \ref{fig:evaluation-dimension-sizes} we state the sizes of the tensor and pipeline parallelism. The data parallelism dimension size is an hyperparameter is chosen from the set $\{2,4,8\}$. Each job uses Ring All-Reduce for data parallelization.
Model weights, gradients and optimizer state are stored in full-precision (float32 - $4$ bytes per parameter).
We calculate the estimated forward and backpropagation time as in ~\cite{brown2020language}.
Similar to ZeRO~\cite{rajbhandari2020zero} and PyTorch Fully-Sharded Data-Parallelism (FSDP)~\cite{zhao2023pytorch}, we simulate the communication between the GPUs that store the same subset of model parameters.

\vspace{0.1in}\noindent\textbf{Job Arrival Plan.} In our simulations, each job is randomly assigned to a subset of the hosts in the cluster, and arrives at a different time, drawn according to the uniform distribution over a period of $10$ seconds.
The simulation starts when the first job arrives, where for each training job we simulate $10$ training iterations, each divided into computation phase (forward and backpropagation step) and communication phase (parameter synchronization phase using Ring All-Reduce).
The number of concurrent training jobs varies from $1$ training job to $5$ concurrent jobs (with each job randomly being a BLOOM, GPT-3 or LLaMA2-70B training job).

\vspace{0.1in}\noindent\textbf{Comparables} to our routing scheme include: (1) \textbf{ECMP}, (2) \textbf{Edge-Coloring}~\cite{10.1145/2619239.2626309}, where the per-training-phased optimization problem is cast as a $k$-edge-coloring problem and solved via a combinatorial algorithm, (3) \textbf{Simulated Annealing}, as suggested in Hedera~\cite{10.5555/1855711.1855730}, and (4) \textbf{ILP optimum}~\cite{10.1145/2619239.2626309}, which is the scheme that exactly optimizes the $2$-layered max-min-fair objective in each training iteration by solving an ILP.

\vspace{0.1in}\noindent\textbf{Infrastructure:}
We ran our experiments on a MacBook Pro with Apple M1 Max $10$ Core CPU, $32$ Core GPU, $16$ Core Neural Engine and $32$GB RAM.

\vspace{0.1in}\noindent{\bf Code:} To support the reproducibility of our results and further investigation we release the code for our algorithm and simualation framework. The code is available at~\cite{DCN-Simulators}.

\begin{figure*}
    \captionsetup{justification=centering}
    \centering
    \begin{subfigure}{0.49\textwidth}
        \includegraphics[width=1.0\textwidth]{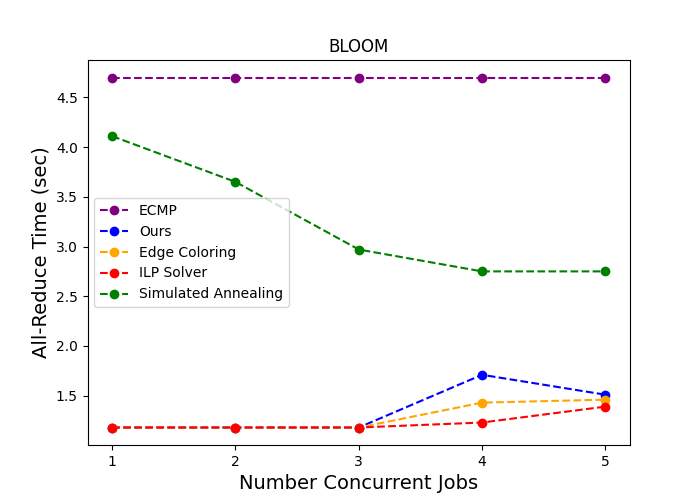}
        \caption{Ring size 2}
        \label{fig:all-reduce-time-ring-2}
    \end{subfigure}
    \hfill
    \begin{subfigure}{0.49\textwidth}
        \includegraphics[width=1.0\textwidth]{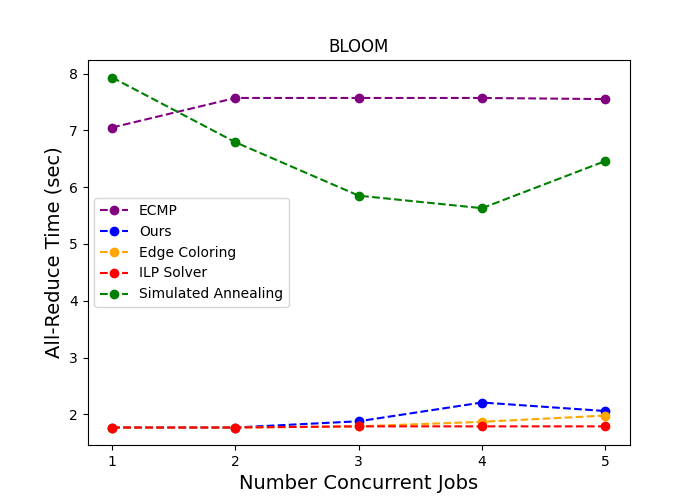}
        \caption{Ring size 4}
        \label{fig:all-reduce-time-ring-4}
    \end{subfigure}
    \hfill
    \begin{subfigure}{0.49\textwidth}
        \includegraphics[width=1.0\textwidth]{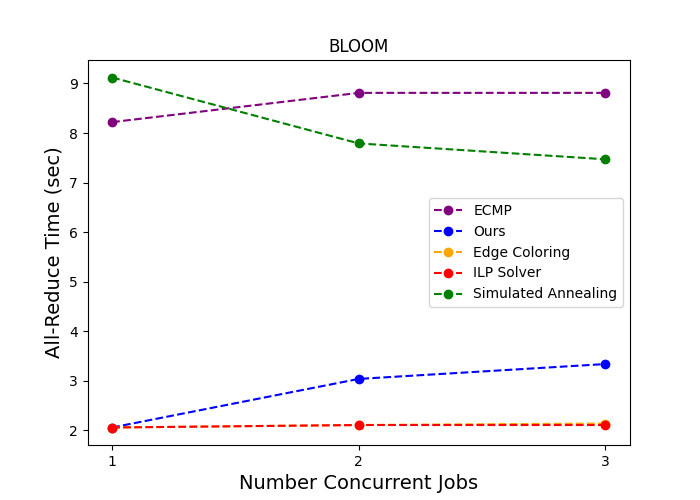}
        \caption{Ring size 8}
        \label{fig:all-reduce-time-ring-8}
    \end{subfigure}
    \hfill
    \begin{subfigure}{0.49\textwidth}
        \includegraphics[width=1.0\textwidth]{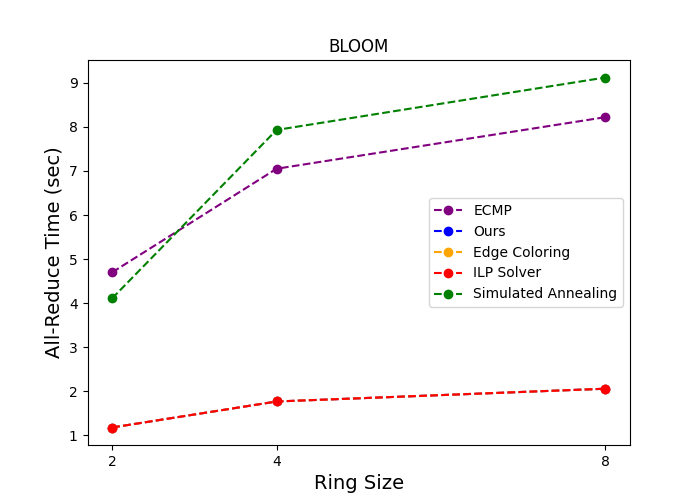}
        \caption{1 Concurrent Jobs}
        \label{fig:all-reduce-time-1-jobs}
    \end{subfigure}
    \hfill
    \begin{subfigure}{0.49\textwidth}
        \includegraphics[width=1.0\textwidth]{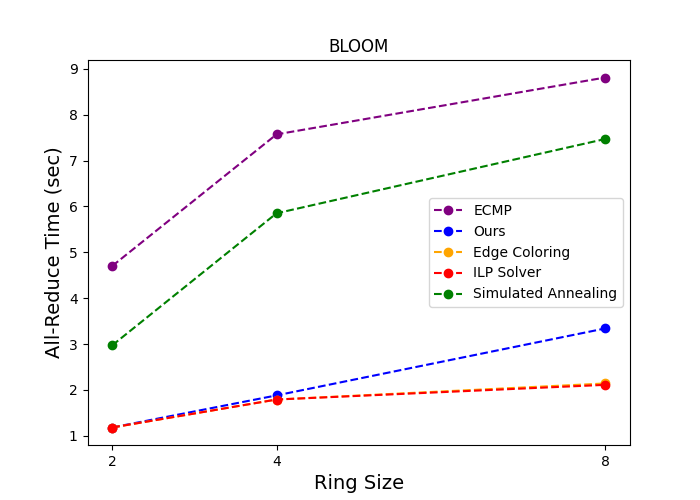}
        \caption{3 Concurrent Jobs}
        \label{fig:all-reduce-time-3-jobs}
    \end{subfigure}
    \hfill
    \begin{subfigure}{0.49\textwidth}
        \includegraphics[width=1.0\textwidth]{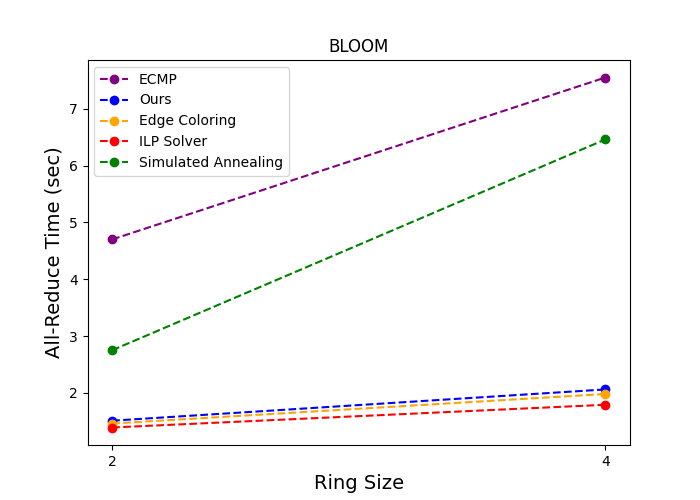}
        \caption{5 Concurrent Jobs}
        \label{fig:all-reduce-time-5-jobs}
    \end{subfigure}
    \caption{All-Reduce (Parameter Synchronization) time of Bloom~\cite{workshop2023bloom}. In figures \ref{fig:all-reduce-time-ring-2} - \ref{fig:all-reduce-time-ring-8} we vary the number of concurrent jobs submitted in the cluster on the x-axis, and in figures \ref{fig:all-reduce-time-1-jobs} - \ref{fig:all-reduce-time-5-jobs} we vary the number of All-Reduce ring size on the x-axis.}
    \label{fig:all-reduce-time}
\end{figure*}


\begin{figure*}
    \captionsetup{justification=centering}
    \centering
    \begin{subfigure}{0.49\textwidth}
        \includegraphics[width=1.0\textwidth]{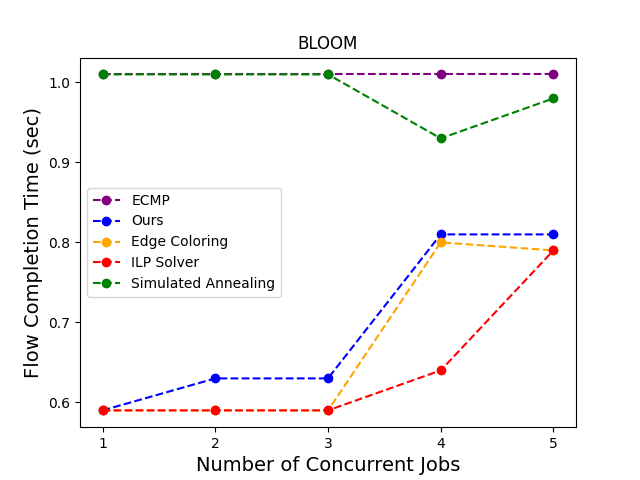}
        \caption{Ring size 2}
        \label{fig:fct-bloom-ring-2}
    \end{subfigure}
    \hfill
    \begin{subfigure}{0.49\textwidth}
        \includegraphics[width=1.0\textwidth]{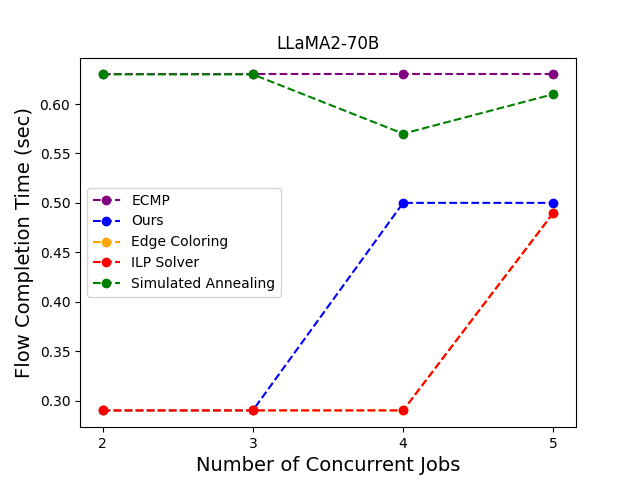}
        \caption{Ring size 2}
        \label{fig:fct-llama2-ring-2}
    \end{subfigure}
    \hfill
    \begin{subfigure}{0.49\textwidth}
        \includegraphics[width=1.0\textwidth]{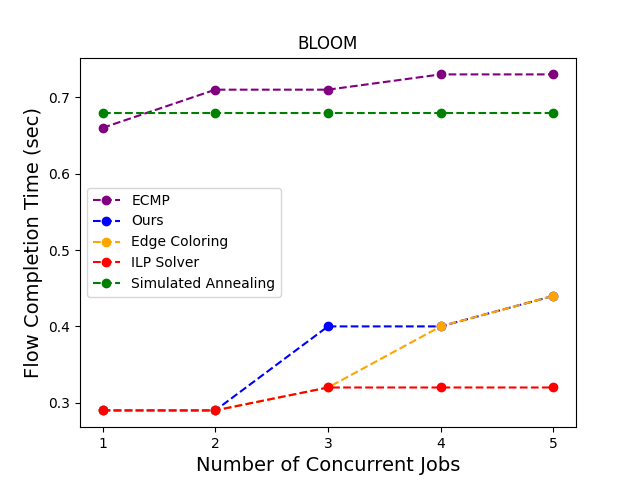}
        \caption{Ring size 4}
        \label{fig:fct-bloom-ring-4}
    \end{subfigure}
    \hfill
    \begin{subfigure}{0.49\textwidth}
        \includegraphics[width=1.0\textwidth]{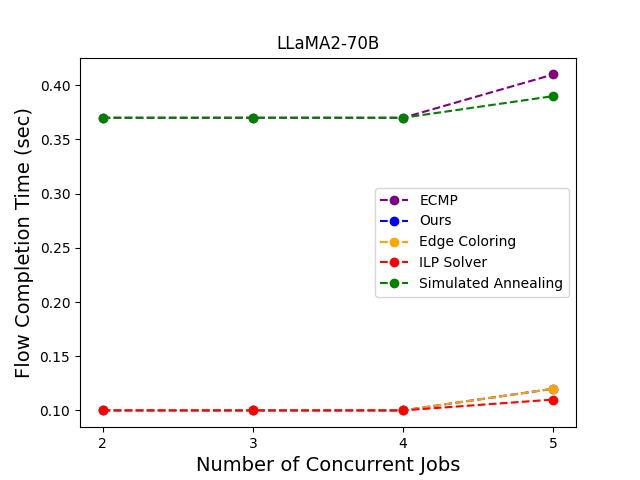}
        \caption{Ring size 4}
        \label{fig:fct-llama2-ring-4}
    \end{subfigure}
    \hfill
    \begin{subfigure}{0.49\textwidth}
        \includegraphics[width=1.0\textwidth]{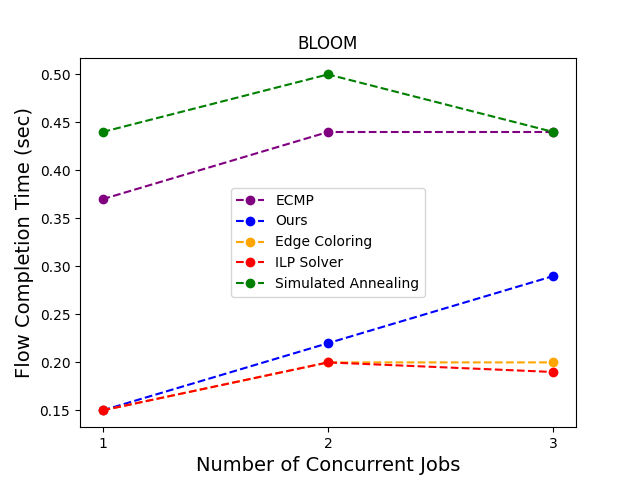}
        \caption{Ring size 8}
        \label{fig:fct-bloom-ring-8}
    \end{subfigure}
    \hfill
    \begin{subfigure}{0.49\textwidth}
        \includegraphics[width=1.0\textwidth]{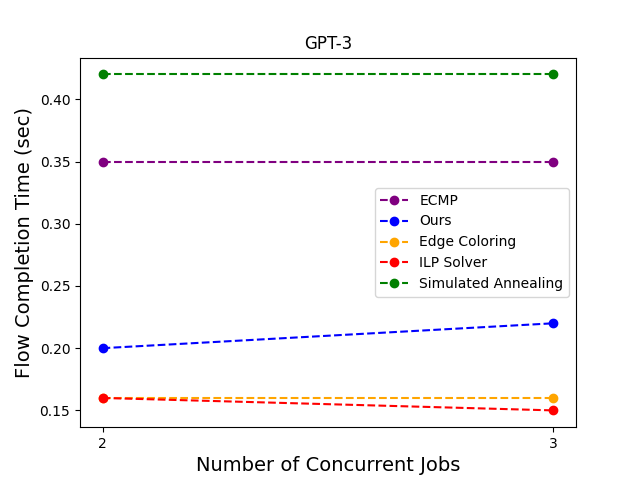}
        \caption{Ring size 8}
        \label{fig:fct-gpt3-ring-8}
    \end{subfigure}
    \caption{Flow Completion Time (FCT) of various architectures. In figures \ref{fig:fct-bloom-ring-2} - \ref{fig:fct-gpt3-ring-8} we vary the number of concurrent jobs submitted in the cluster on the x-axis.}
    \label{fig:flow-completion-time}
\end{figure*}


\begin{figure*}
    \captionsetup{justification=centering}
    \centering
    \begin{subfigure}{0.49\textwidth}
        \includegraphics[width=1.0\textwidth]{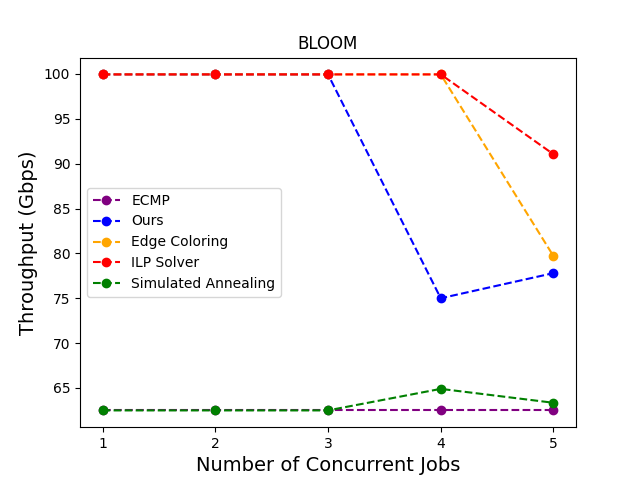}
        \caption{Ring size 2}
        \label{fig:throughput-bloom-ring-2}
    \end{subfigure}
    \hfill
    \begin{subfigure}{0.49\textwidth}
        \includegraphics[width=1.0\textwidth]{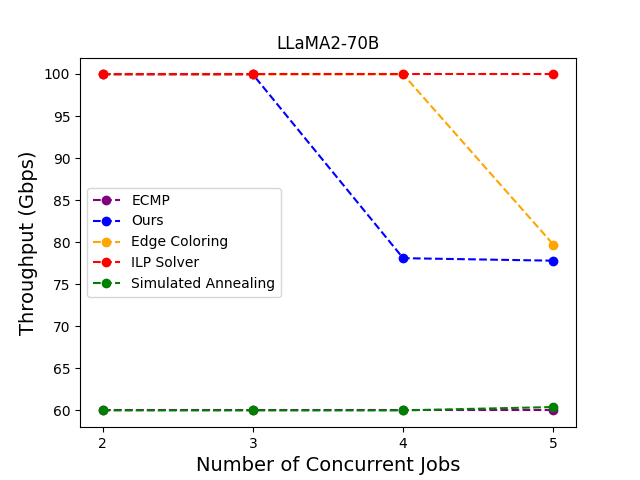}
        \caption{Ring size 2}
        \label{fig:throughput-llama2-ring-2}
    \end{subfigure}
    \hfill
    \begin{subfigure}{0.49\textwidth}
        \includegraphics[width=1.0\textwidth]{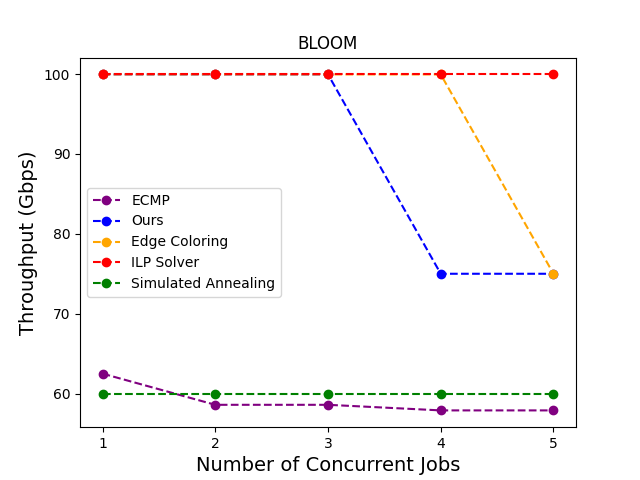}
        \caption{Ring size 4}
        \label{fig:throughput-bloom-ring-4}
    \end{subfigure}
    \hfill
    \begin{subfigure}{0.49\textwidth}
        \includegraphics[width=1.0\textwidth]{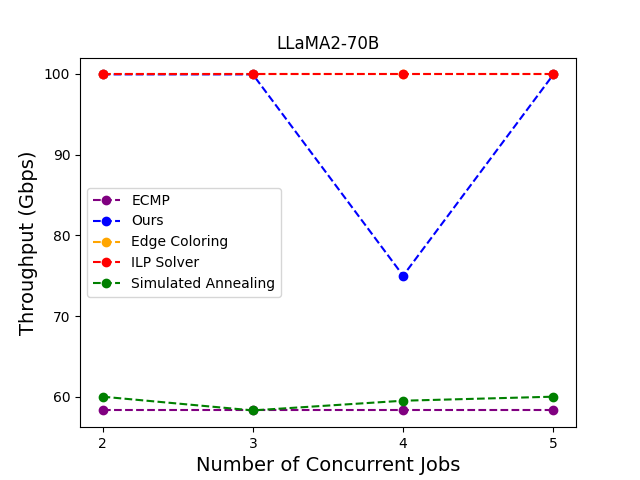}
        \caption{Ring size 4}
        \label{fig:throughput-llama2-ring-4}
    \end{subfigure}
    \hfill
    \begin{subfigure}{0.49\textwidth}
        \includegraphics[width=1.0\textwidth]{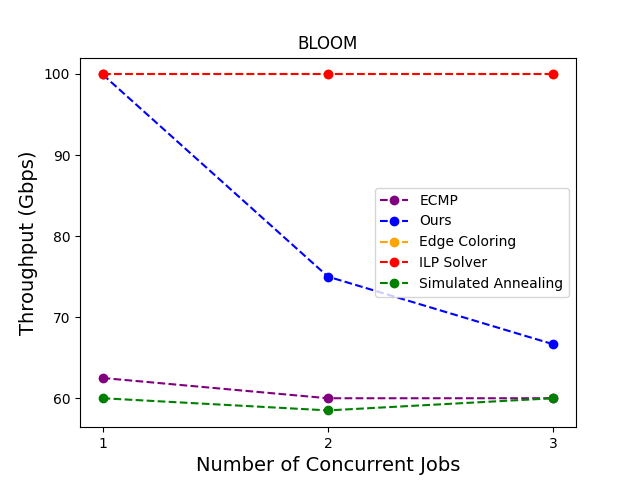}
        \caption{Ring size 8}
        \label{fig:throughput-bloom-ring-8}
    \end{subfigure}
    \hfill
    \begin{subfigure}{0.49\textwidth}
        \includegraphics[width=1.0\textwidth]{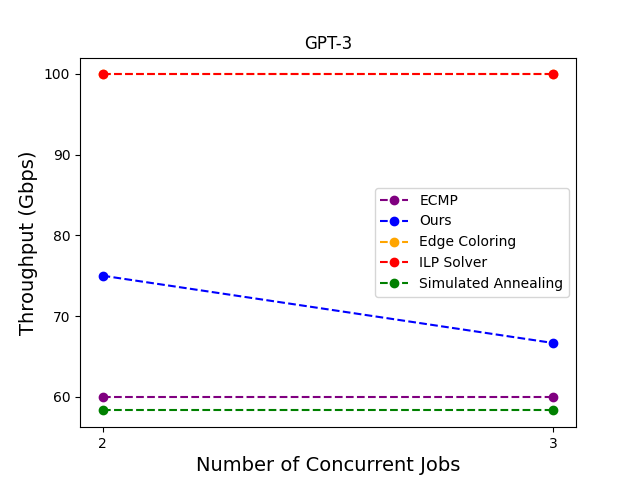}
        \caption{Ring size 8}
        \label{fig:throughput-gpt3-ring-8}
    \end{subfigure}
    \caption{Throughput of various architectures. In figures \ref{fig:throughput-bloom-ring-2} - \ref{fig:throughput-gpt3-ring-8} we vary the number of concurrent jobs submitted in the cluster on the x-axis.}
    \label{fig:throughput}
\end{figure*}

\begin{figure*}[h]
    \centering
    \includegraphics[width=\textwidth]{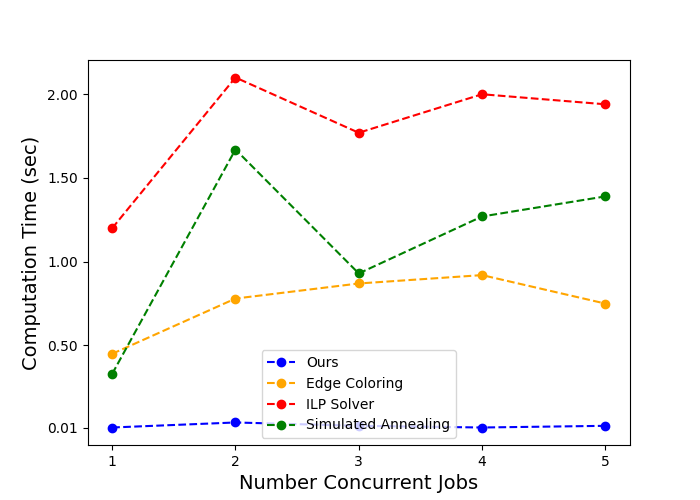}
    \caption{Measuring the computation time of routing schemes}
    \label{fig:computation-time-measurement}
\end{figure*}

\vspace{0.1in}\noindent\textbf{Metrics:}
We consider several objectives: 

\begin{itemize}
    \item {\bf All-Reduce time.} The All-Reduce time for a job is computated as follows. For each training iteration of the job, the time for the longest flow between GPUs that hold the same set of model parameters is computed. These values are then averaged across all training iterations. 
    \item The average {\bf flow completion time (FCT)} across all flows in all training iterations for a given job. 
    \item The average {\bf throughput} across all flows in all training iterations for a given job. 
    \item {\bf Runtimes (latency)} of the different evaluated schemes.
\end{itemize}

\noindent\vspace{0.1in}\textbf{Fault model.}
To examine performance under network faults, we randomly bring down $1/4/8$ spine switches. We then contrast the performance of our routing scheme and the alternatives with respect to the above discussed metrics.
\newline
\newline
\noindent Importantly, the number of active hosts, the set of jobs, the mapping of jobs to hosts, and the data parallelization strategy, are the same in any specific comparison.

\subsection{Performance Comparison}

In Figures \ref{fig:all-reduce-time} - \ref{fig:throughput}, we present representative results for the All-Reduce time, flow completion time, and throughput (respectively). Each sub-figure presents the performance with respect to a single job for a specific choice of parameters (DNN model, number of concurrent jobs in the system, data-parallelization strategy). Each data point in the figures captures an average of $10$ training iterations. Observe that our scheme comes very close to the optimum (as captured by the combinatorial Edge Coloring algorithm and the ILP solution) in terms of the All-Reduce time, our primary performance metric. This is despite sometimes exhibiting somewhat lower \emph{average} performance than the optimum for FCT and throughput (though still being superior to Simulated Annealing and ECMP). We believe that the reason for this is that because the duration of the training phase is determined by the \emph{slowest} connections, the average performance across flows of Edge Coloring and ILP are not translated to significant improvements in performance.

\begin{figure*}
    \captionsetup{justification=centering}
    \centering
    \begin{subfigure}{0.49\textwidth}
        \includegraphics[width=1.0\textwidth]{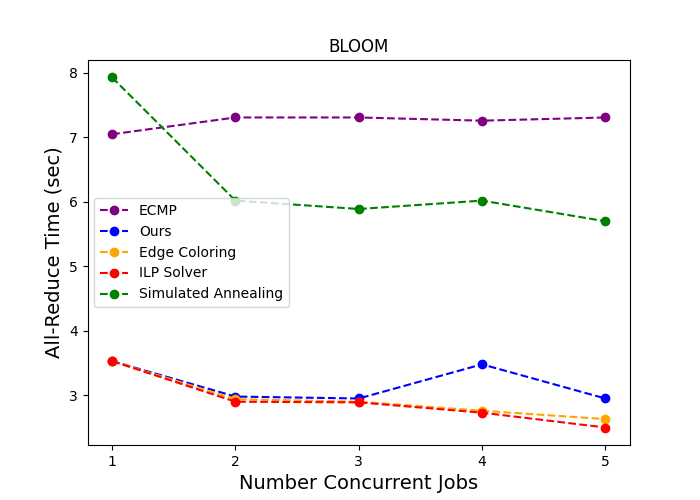}
        \caption{1 Core Failures, Ring size 4}
        \label{fig:all-reduce-time-ring-4-core-failures-1}
    \end{subfigure}
    \hfill
    \begin{subfigure}{0.49\textwidth}
        \includegraphics[width=1.0\textwidth]{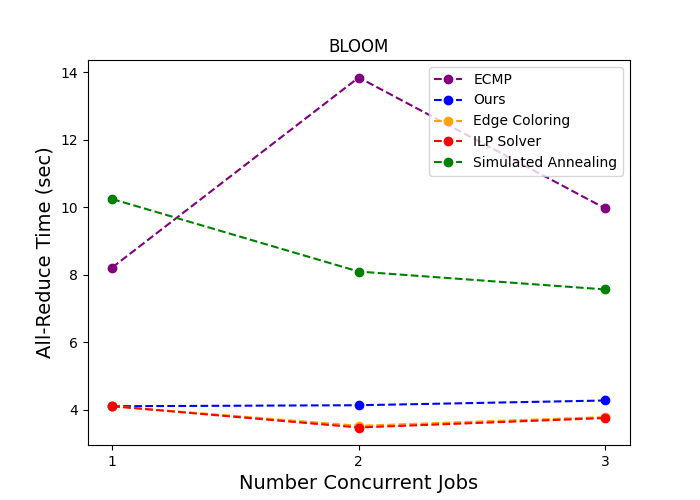}
        \caption{1 Core Failures, Ring size 8}
        \label{fig:all-reduce-time-ring-8-core-failures-1}
    \end{subfigure}
    \hfill
    \begin{subfigure}{0.49\textwidth}
        \includegraphics[width=1.0\textwidth]{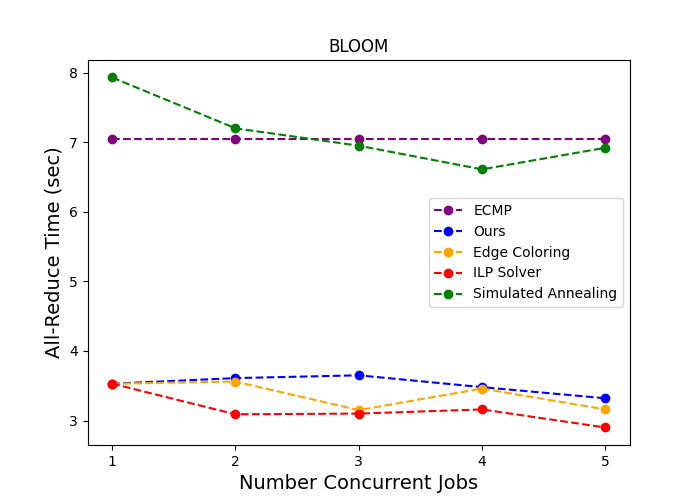}
        \caption{4 Core Failures, Ring size 4}
        \label{fig:all-reduce-time-ring-4-core-failures-4}
    \end{subfigure}
    \hfill
    \begin{subfigure}{0.49\textwidth}
        \includegraphics[width=1.0\textwidth]{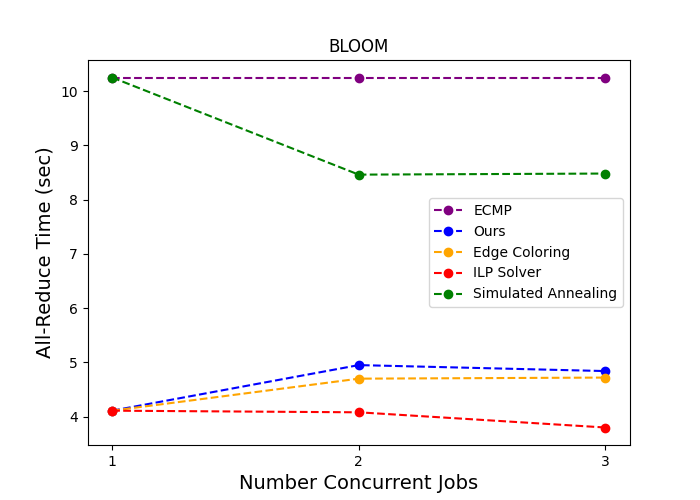}
        \caption{4 Core Failures, Ring size 8}
        \label{fig:all-reduce-time-ring-8-core-failures-4}
    \end{subfigure}
    \hfill
    \begin{subfigure}{0.49\textwidth}
        \includegraphics[width=1.0\textwidth]{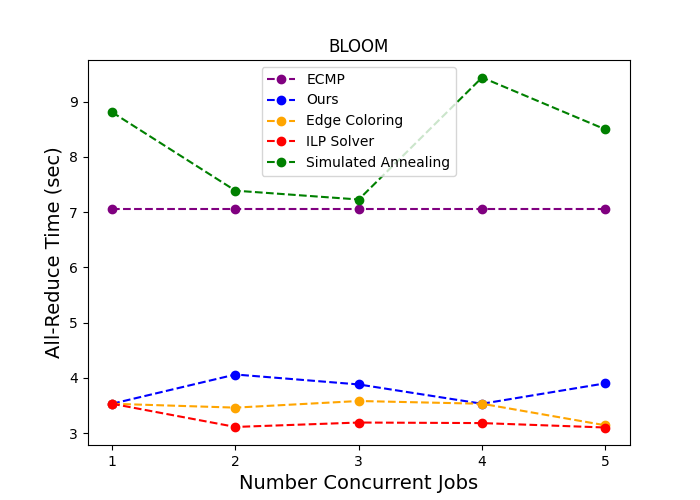}
        \caption{8 Core Failures, Ring size 4}
        \label{fig:all-reduce-time-ring-4-core-failures-8}
    \end{subfigure}
    \hfill
    \begin{subfigure}{0.49\textwidth}
        \includegraphics[width=1.0\textwidth]{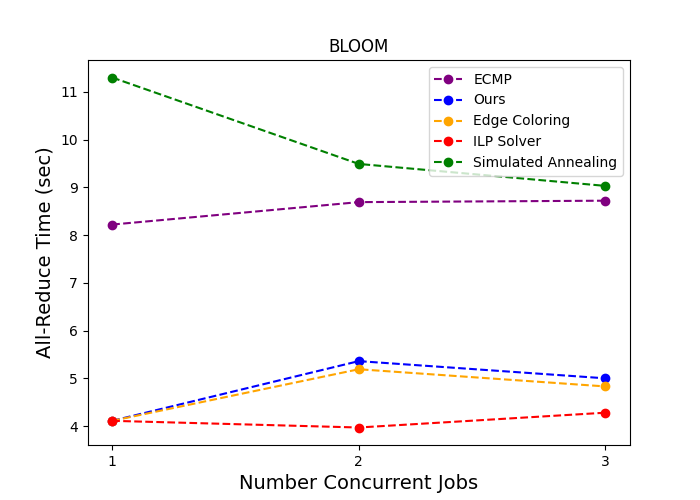}
        \caption{8 Core Failures, Ring size 8}
        \label{fig:all-reduce-time-ring-8-core-failures-8}
    \end{subfigure}
    \caption{All-Reduce (Parameter Synchronization) time of Bloom~\cite{workshop2023bloom} under various Core Failures.}
    \label{fig:all-reduce-time-core-failures}
\end{figure*}

\subsection{Runtimes}

Figure \ref{fig:computation-time-measurement} presents a comparison of runtimes, in terms of the computation latency when a job stops communicating or starts communicating. The number of commodities to be assigned to a paths varies from $100$ flows to over $1500$. Not surprisingly, our simple algorithm significantly outperform the computation heavy alternatives. In particular, the computation time for our algorithm is roughly $10$ms across the different numbers of commodities. 

Recall that our scheme incorporates forward-looking optimizations, where the controller optimizes path assignments \emph{before} a new flow arrives/leaves. In such scenarios, the online overhead of path assignments is negligible: the sum of the network RTT (O(microseconds)), the time it takes the controller to issue the \textit{pre-computed} path assignment, and the time it takes the hosts to enforce it. Our results for runtimes show that even if the controller is ``surprised'' by the traffic conditions (say, due to the arrival of a new job), the online runtime is low ($±10$ms). As discussed above, for training jobs such as those targeted here, the duration of data delivery for a single training iteration can last seconds, rendering the online runtime of our algorithm negligible in comparison. In fact, our results show that even for smaller training jobs, where data delivery for training iterations only requires $100$s of milliseconds to complete, our approach can be of value.

We also emphasize that, as explained above, the computation time for our scheme can be further accelerated by using stronger/dedicated hardware and by parallelizing computation.

\subsection{Performance Under Failures}
Figure \ref{fig:all-reduce-time-core-failures} shows how our routing scheme performs, in terms of All-Reduce time, when spine switches (randomly chosen) fail. When such a failure is detected, the controller re-runs the computation (if need be)  and sends the updated path assignment to the hosts. Once again, our scheme comes very close to the optimum (with much lower runtimes).
\section{Our Approach \emph{vs.} TopoOpt}

\begin{figure*}
    \centering
    \includegraphics[width=\textwidth]{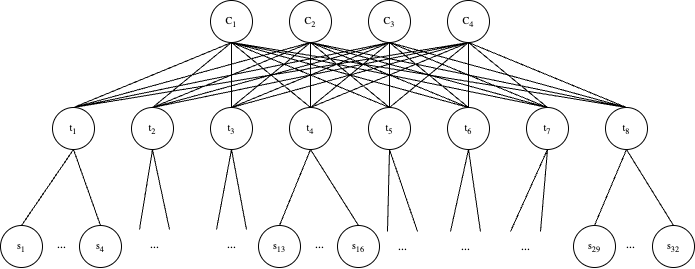}
    \caption{Simple 2-Layer Clos network where TopoOpt could fail}
    \label{fig:comparison-topoopt}
\end{figure*}

TopoOpt~\cite{wang2022topoopt} proposes utilizing optical switching to optimize the topology with respect to a parallelization strategy. This enables communicating hosts/GPUs to be directly connected. In TopoOpt, hosts/GPUs are directly connected to optical switches, providing the flexibility to interconnect them in arbitrary ways. We discuss below two major limitations of TopoOpt and how these are addressed in our framework. 

\vspace{0.1in}\noindent{\bf Slow reconfigurability.} Whether TopoOpt is implemented using patch panels or optical circuit switching (OCS), the configuration time needed to change the network topology can be slow. Indeed, as noted in~\cite{wang2022topoopt}, ``OCSs can potentially be used to reconfigure the topology of a job within training iterations, whereas patch panels are only suitable when the topology remains intact throughout the entire training of a particular job. Our evaluations demonstrate that the reconfiguration latency of today’s OCSs is too high
for today’s DNNs, leading to sub-optimal performance when the topology is reconfigured within iterations. As a result, given that faster technologies are not yet available, TOPOOPT uses a one-shot reconfiguration technique... TOPOOPT then reconfigures the interconnection between training servers of each
job before the job starts and keeps the topology intact until
the training is complete (or to recover from failures).'' TopoOpt is thus incapable of taking advantage of freed bandwidth resulting from the termination of a specific training \emph{iteration} or to reallocating network resources when an \emph{existing} job starts a new training iteration. TopoOpt is also slow to respond to failures for these reasons.

\vspace{0.1in}\noindent{\bf Limited expressiveness.} TopoOpt concentrates on scenarios with hundreds of hosts/GPUs. To scale up, TopoOpt proposes using ToRs as an intermediate layer between the hosts/GPUs and the optical switches. As explained below, this, once again, gives rise to the problems addressed by our framework. Consider the simple $2$-layer Clos network in Figure \ref{fig:comparison-topoopt}. Suppose that there are two training jobs, namely, $j_1$, and $j_2$, each trained in a hybrid parallelization manner. Suppose that these jobs induce commodity $(s_1,s_{13})$ for job $j_1$ and commodity $(s_2,s_{14})$ for job $j_2$. If TopoOpt attempts to directly interconnect the relevant ToRs, $t_1$ and $t_4$, say, through the optical switch $c_1$, without utilizing additional paths, the two commodities will collide. To achieve optimal All-Reduce time, the two commodities must use link-disjoin paths. This implies that solutions like ours must be leveraged.

\section{Related Work}

To facilitate more communication-efficient DNN training, prior research has investigated how to reduce the size of the DNN parameters transmitted across the network~\cite{alistarh2017qsgd,8868152,chung2017accelerating,goyal2018accurate,hashemi2018tictac,iandola2016firecaffe,jayarajan2019prioritybased,jia2018highly,mudigere2023softwarehardware,harlappipedream,wang2019blink}, as well as parallelization strategies that take into account the available network bandwidth~\cite{addanki2019placeto,alistarh2017qsgd,jia2018data,harlappipedream,tarnawski2020efficient}. 
While these proposals co-optimize computation and communication, they do not consider the \emph{physical network topology} as an optimization dimension. In addition, these proposals consider a single training job, ignoring the scenario that multiple jobs co-exist. 

Recent work, namely TopoOpt~\cite{wang2022topoopt}, proposes a direct-connect DNN training system that co-optimizes network topology and parallelization strategy. Topology adaptation in TopoOpt relies on reconfigurable optical switches and patch panels. We point out that, however, that TopoOpt has two significant drawbacks: \textbf{(i)} Realizing TopoOpt involves significant operational and monetary overheads and, in particular, in-network hardware changes (transitioning to optical switches), and \textbf{(ii)} TopoOpt supports ToR-to-ToR direct connectivity. When different hosts/GPUs under the same ToR must reach destinations under different ToRs, The benefits of TopoOpt are limited.

Prior studies, starting with Hedera~\cite{10.5555/1855711.1855730}, proposed dynamic flow scheduling systems for multi-stage switch topologies found in data centers. These studies differ from ours in terms of the global optimization objective (derived from our ML-training-specific motivation), the algorithms used, the implementation (ours is host-side only), and more.

\section{Discussion}

\noindent{\bf Augmenting our scheme with host-based rate control.} An interesting approach for enhancing the role of the host in our scheme is to enforce rate limiting at hosts. Specifically, when the controller computes a path-assignment, it can also compute, for each flow, the ``bandwidth share'' allocated to that host. This share could then be enforced by hosts through the transport-layer, e.g., by setting a maximum congestion window. Specifically, if the bandwidth share is set to be $x$ (in Mbps), the congestion window $cwnd$ can be set such that $\frac{cwnd}{RTT}=(1+\epsilon)x$, where $\epsilon$ is some small ``slack'' value.

\noindent{\bf Predicting flow starts and terminations, and how this impact online runtimes.} As discussed above, the controller can perform forward-looking optimizations based on predictions of when the current training iterations will terminate and when new training iterations will begin. By doing so, the controller can avoid the online runtimes associated with optimizing after the fact. We emphasize, however, that even in the absence of accurate predictions, our scheme can still make timely decisions (requiring $±10$ms to compute a new path assignment, which should be contrasted with the duration of a training iteration, which can be in the order of $100$s on ms or more). We leave the thorough investigation of prediction mechanisms in this context to future research.

\noindent{\bf Accelerating online runtimes.} We showed that our scheme quickly outputs a path assignment (in a matter of milliseconds). We believe that further acceleration can be achieved by parallelizing our scheme and running it on stronger hardware. We leave this for future exploration. (Again, we stress that when predictions of flow start/termination are accurate, online runtimes, with the exception of network and configuration delays, can be avoided altogether.)

\noindent{\bf Extending $2$-layer max-min fairness to more layers.} Our global notion of optimality reflects the desiderata of maximizing the minimum flow within a job (the first layer) and maximizing fairness across jobs (the second layer). However, tensor and pipeline parallelism give rise to another layer, as explained next. Suppose that the model is divided into two phases (sets of consecutive layers), and each of these is replicated across hosts/GPUs to support data parallelism. For the model parameters to synchronize, both phases must be synchronized. Hence, the synchronization time is dictated by the slowest phase to finish synchronizing. In our framework this translates to another (lower) layer of max-min fairness. In fact, our evaluation already reflects this $3$-layered optimization objective.

\noindent{\bf Replacing the path-assignment algorithm with an ML model (and safety considerations).} Our solution involves applying a simple and robust greedy algorithm to assign paths to commodities. We chose this algorithm since it comes with provable guarantees and is fast to execute online (and can be made faster through stronger/dedicated hardware and parallelizing the computation). One approach for accelerating the computation even further would be replacing the greedy algorithm with an ML model (e.g., a DNN) and exploiting the fast inference time of such models. The ML model could either be trained to optimize path assignments directly (potentially even outperforming our greedy algorithm in terms of solution quality), or learn to mimic our greedy algorithm (imitation learning). Employing deep learning to this end, however, would require putting in place mechanisms for ensuring the safety of this ML component, e.g., offline formal verification. (See companion Huawei project for investigation of such techniques and machinery.)

\section{Discussion}

We have presented a system for adapting routing in ML training clusters in a manner that optimizes a global notion of training efficiency and fairness. We evaluated our approach both theoretically and empirically, evidencing its usefulness and promise. While our focus has been on very large ML models, we believe that our approach is applicable also to ML models of more modest sizes, and also to models ran on all-purpose datacenter fabrics. Identifying the model sizes for which our approach is effective, and extending our ideas to the scenario that ML traffic must compete with ECMP-controlled cross traffic and to Clos networks (fat tree topologies) with more than $2$ layers.

\bibliographystyle{plain}
\bibliography{\jobname}

\end{document}